\newtheorem{proposition}{Proposition}
\newtheorem{corollary}{Corollary}
\newtheorem{theorem}{Theorem}
\newtheorem{lemma}{Lemma}
\theoremstyle{definition}
\newtheorem{definition}{Definition}
\title{The Demand Query Model for Bipartite Matching}
\author{Noam Nisan\footnote{School of Computer Science and Engineering, The Hebrew University of Jerusalem.
Supported by the European Research Council (ERC) under the European Union’s Horizon 2020 research and innovation programme grant agreement No 740282.}}
\begin{document}
\maketitle

\begin{abstract}
We introduce a ``concrete complexity'' model for studying algorithms for matching in bipartite graphs.  The model
is based on the ``demand query'' model used for combinatorial auctions.  Most (but not all) known algorithms for bipartite matching seem to be translatable into this model including exact, approximate, sequential, parallel, and online ones.  

A perfect matching in a bipartite graph can be found in this model with $O(n^{3/2})$ demand queries
(in a bipartite graph with $n$ vertices on each side)
and our main open problem is to either improve the upper bound or prove a lower bound.  An
improved upper bound could yield ``normal'' algorithms whose running time is better
than the fastest ones known, while a lower bound would rule out a faster algorithm
for bipartite matching from within a large class of algorithms.

Our main result is a lower bound for finding an approximately maximum size matching in parallel: A {\em deterministic}
algorithm that runs in $n^{o(1)}$ rounds, where each round can make
at most $n^{1.99}$ demand queries cannot find a matching whose size is within $n^{o(1)}$ factor
of the maximum.  This is in contrast to {\em randomized} algorithms
that can find a matching whose size is $99\%$ of the maximum in $O(\log n)$ rounds, each making $n$ demand
queries.
\end{abstract}

\section{Introduction}

In the (unweighted, bipartite) matching problem, a bipartite graph with $n$ left vertices and $n$ right
vertices is given, and the problem is to find a maximum-size matching:
a set of edges of the graph of largest cardinality such that no two of which share a vertex.  This problem
has numerous applications and has been studied extensively in a variety of computational models including
sequential, parallel, online, and approximate.  Many extensions of the problem (e.g. to 
weighted or to non-bipartite graphs) have been studied as well.

The fastest known deterministic algorithm for the problem \cite{HK73} is close to half a century old
and runs in time $O(n^{5/2})$.  It is a long standing open problem whether this running time may 
be improved, but a
faster {\em randomized} algorithm whose running time is $n^\omega$ (where $\omega=2.3...$ is the matrix multiplication exponent) was
given in \cite{MS04}.  Another long standing open problem is whether there 
exists a deterministic NC algorithm
(a parallel algorithm running in poly-logarithmic time using a polynomial number of processors)
for finding a maximum matching. Randomized such algorithms were given in \cite{KUW85,MVV87}.

While an enormous amount of algorithmic work was done on many variants of the matching problem,
there is currently no hope for proving hardness results: we simply lack any tools that 
can prove lower bounds for general algorithms.  One approach for progress is to define a {\em concrete model of
computation} -- one that is strong enough to capture many of the known algorithms -- and try to 
understand the complexity in that model.  This is our approach here. 

\subsection{The Demand Query Model}

Our computational model is inspired by an economic point of view.  We consider the right vertices of our bipartite graph 
to be items and the left vertices to be {\em agents}, each who is interested in acquiring exactly a single item from
some subset of items -- those that are adjacent to it in the graph.  In a general economic setting, a ``demand query'' 
(see e.g., \cite{BN07}) asks an economic agent the following type of question: suppose that 
each item $j$, from some set of items, could be 
purchased for price $p_j$ -- which set of items would you demand to purchase?  In our setting, we assume that
our agents are ``unit demand'', i.e. desire a single item, and as our graphs are unweighted our agent will
simply choose the least expensive item.  Thus in our setting we consider the following type of query on a biprtite graph.

\begin{definition}
A {\em Demand Query} accepts a left vertex $v$ and an order on the right vertices $(u_1...u_{n})$
and returns the index of the first vertex $i$ in the order such that there is an edge $(v,u_i)$ in 
the graph, or $0$ to denote that none exists.\footnote{For compatibility with the notion of demand queries, 
we only allow $v$ to be a left vertex.  In the graph context it may be natural to consider also variants of our model
where $v$ can be any vertex in the graph.  It turns out that our main lower bound applies to this stronger model as well.}
\end{definition}  

Desiring a concrete model, the only cost of an algorithm that we will count is the number of demand queries required. 
(It will turn out that none of our upper bounds have any other significant algorithmic costs.) 
In particular every problem in this model has an upper bound of $O(n^2)$ as a single demand query can certainly tell
us whether an edge $(v,u)$ exists in the graph.  

We first observe that many of the classic algorithms for matching can be ``implemented'' in this model, including
augmenting paths methods and primal-dual auction-like algorithms \cite{B88,DGS86}.  These can give
us the following basic upper bound:

\begin{theorem}
A maximum size matching in a bipartite graph with $n$ left and $n$ right vertices can be found using $O(n^{3/2})$
demand queries.
\end{theorem}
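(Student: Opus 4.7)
The plan is to implement the Hopcroft--Karp algorithm in the demand query model. Hopcroft--Karp runs in $O(\sqrt{n})$ phases; each phase (a) uses BFS from the currently unmatched left vertices along alternating paths to build a layered graph ending at the first level containing unmatched right vertices, and then (b) augments along a maximal set of vertex-disjoint shortest augmenting paths found by a DFS inside that layered graph. Since $O(\sqrt{n})$ phases suffice for any bipartite graph, it is enough to show that each phase can be implemented using $O(n)$ demand queries.

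The central bookkeeping device is a per-vertex ordering trick. At every left vertex $v$ I would maintain an ordering of the right vertices in which ``uninteresting'' ones are pushed to the end: during BFS this means ``already visited on the right'', while during DFS it additionally includes every right vertex that has become \emph{dead} (proved by a prior DFS branch to lie on no augmenting path) or \emph{used} (already consumed by an earlier found path). Issuing a demand query on $v$ with this ordering, and then demoting the returned vertex into the uninteresting tail before the next query, exposes the interesting neighbors of $v$ one at a time; the first query that returns $0$, or that returns a vertex already sitting in the tail, certifies that $v$ has no interesting neighbor left. The cost paid at $v$ is therefore exactly (interesting neighbors ever discovered from $v$) plus one terminal query.

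First I would run the BFS. Matched edges are traversed right-to-left without any queries because the current matching is known, and the ordering trick exposes all previously unvisited right-neighbors of every left vertex explored. Summing the per-vertex charges gives (right vertices reached) $+$ (left vertices reached) $= O(n)$. Then I would run the DFS: each interesting $u$ returned to $v$ triggers a recursion through the matched edge leaving $u$; on failure $u$ becomes dead, on success $u$ is consumed by the new augmenting path. In either case $u$ undergoes its unique interesting-to-uninteresting transition in this phase, and I charge that transition to the unique $v$ whose query discovered it. The terminal $+1$ is paid at most once per left vertex, so the DFS also costs $O(n)$ queries, hence the whole phase costs $O(n)$.

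The place I expect to need the most care is precisely this DFS charging: the uninteresting tails at \emph{every} left vertex must be updated the moment a right vertex becomes dead or used, so that a dead or used vertex can never be returned by a later demand query from another left vertex. Once that synchronisation is in place the per-phase bound is clean, and multiplying by the $O(\sqrt{n})$ phases yields the claimed $O(n^{3/2})$ total.
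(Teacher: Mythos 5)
Your proposal is correct, but it takes a genuinely different route from the paper's proof. The paper does not implement Hopcroft--Karp phases at all: it first runs the ascending-auction algorithm with $\epsilon = 1/\sqrt{n}$, costing $O(n/\epsilon)=O(n^{3/2})$ queries and leaving a matching within an additive $O(\sqrt{n})$ of the maximum, and then finishes with at most $\sqrt{n}$ \emph{individual} augmenting-path searches, each implemented by its $O(n)$-query BFS lemma. You instead translate Hopcroft--Karp wholesale, arguing that an entire phase --- the layered BFS plus the DFS that extracts a maximal set of vertex-disjoint shortest augmenting paths --- costs $O(n)$ queries, by demoting exhausted right vertices to the tail of every ordering and charging each productive query to the unique interesting-to-uninteresting transition of the right vertex it discovers. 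Both charging arguments are sound; in particular the synchronization you worry about is free in this model, since the ordering handed to a demand query is constructed afresh from the algorithm's internal state at query time, so keeping every left vertex's tail current has no cost. The paper's route is more modular (it reuses two subroutines it has already analyzed --- the auction and the single augmenting-path search --- and needs no Hopcroft--Karp phase analysis), whereas yours is a self-contained simulation of the classical algorithm and establishes the slightly stronger fact that a whole phase, not just one augmenting path, fits in $O(n)$ queries; it also aligns directly with the paper's remark that simulating each query in $O(d_v)$ time recovers the $O(m\sqrt{n})$ running time for sparse graphs.
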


\subsection{From the Demand Query Model to General Algorithms}

When converting an algorithm in the demand query model into a ``real algorithm'' that does not have a
demand query as a primitive, but rather must implement it over a normal computation model, each demand query can be 
trivially implemented in $O(n)$ time by going over the edges of the queried vertex and finding the one with
minimum rank in the list.  Using this $O(n^{3/2})$ bound with this simulation
matches the currently best $O(n^{5/2})$ algorithm for bipartite matching \cite{HK73}.\footnote{One may certainly 
attempt to obtain faster simulations of the demand query model by normal algorithms using, say, clever data structures,
but we have not been able to do so in general.}
While we generally focus on the case of dense graphs, we note that for sparse graphs with $m << n^2$ edges, one may also obtain the improved known upper bound of $O(mn^{1/2})$, as a demand
query to vertex $v$ can in fact be simulated in $O(d_v)$ time, where $d_v$ is the degree of $v$ in the graph.\footnote{
Getting the $O(mn^{1/2})$ total running time
is slightly non-trivial since this requires that high-degree vertices are not queried too 
often by the simulated algorithm.  In case
that they are, some data structure work will be needed to even things out in an amortized sense \cite{N09}.}
We also find that the complexity of finding an {\em approximately} maximal matching,
one  whose size is at least $(1-\epsilon)$ of the maximal in the demand query model is only $O(n/\epsilon)$. 
Again, with the direct
simulation of demand queries by ``normal'' algorithms, this implies the (known)
$O(n^2)$ algorithm (linear in the input size) for finding a matching of size 
at least $99\%$ of maximal.

While most algorithms for bipartite matching seem to be translatable into this model, a known algorithmic technique that does {\em not} seem to fall under this model is an algebraic one relying on matrix multiplication.  A simple example is solving the decision problem of whether a perfect matching exists using randomization to test whether the symbolic determinant of the 
graph is identically zero \cite{L79} which can be done in time $O(n^\omega)$ (where $\omega=2.3...$ is
the matrix multiplication exponent).  A randomized algorithm with the same running time 
that actually finds a maximum matching 
is given in \cite{MS04}.\footnote{The distinction between the decision problem and the search problem is not
significant for anything in this paper though: all upper bounds actually find a matching, and the lower bound
applies even for the decision variant.}  These algorithms are all randomized and it is not clear
whether deterministic algorithms can match this performance.  
Another known technique that dose not seem to fall in this model is using interior point methods,
which in \cite{M13} was used for giving an $O(m^{10/7})$ algorithm, where $m$ is the total 
number of edges in the graph, which beats 
the $O(n^{5/2})$ bound for very sparse graphs.

\subsection{Lower Bounds?}

A lower bound of 
$\Omega(n)$ for bipartite matching in the demand query model is trivial, even for the approximate version of the problem\footnote{E.g. for distinguishing between a maximum matching of any positive size and zero size, i.e. whether the graph
is empty.}.  Our {\em main open problem} is to close the gap between this trivial lower bound and the
$O(n^{3/2})$ upper bound: either improve the upper bound, which
would likely imply faster normal algorithms\footnote{In principle,
as the demand query model formally allows ``non-uniform'' algorithms, this simulation by normal algorithms is not a formal theorem, but we do not know of any example where this non-uniformity is used.}, or prove a lower bound, which would rule out faster algorithms
from a rather wide class of algorithms.  

\vspace{0.1in}
\noindent
{\bf Open Problem 1:}
Does there exists a faster algorithm for finding a maximal matching in the demand query model? (I.e. one 
that uses $o(n^{3/2})$ queries,  
or perhaps even $O(n)$ queries?)
\vspace{0.1in}

A hint that an answer may be nontrivial is that the nondeterministic 
(certificate) complexity of maximum matching
in this model is only $\Theta(n)$: a maximum matching can be given by listing the $n$ edges in it,
where each edge can be certified by a single demand query, and, by Hall's theorem, the maximality of said matching 
of size $k$ can be 
certified by exhibiting a set vertices (left and right) of total size $2n-k$ with no edges between them, which 
can be certified using one demand query for each left vertex in the set. 

\subsection{Related Models}

The demand query model lies between two well studied models: Boolean decision trees \cite{BD02} and 
communication complexity \cite{KN96}.
In the Boolean decision tree model the allowed queries are only to individual edges, i.e. may ask whether an edge 
$(v,u)$ exists in the graph.  In the communication complexity model the left vertices are treated as agents 
that each ``knows''
the set of edges connected to it and considers the amount of communication needed between these agents.  This
would be equivalent to a decision tree model that allows arbitrary queries about the set of egdes adjacent to
any single left vertex.  

In the decision tree model, it is not difficult to see that the complexity of even determining whether
a perfect matching exists
is $\Theta(n^2)$, even non-deterministically (for certifying none existence).  
In the communication complexity model what is known is identical to what was described above in the 
demand query model, 
and one may consider
the demand query model as an intermediate step towards understanding the complexity in the communication
complexity model.

A model that turns out to be equivalent, up to log factors, to the demand query model is a model that allows ``OR''
queries on the edges of a single left vertex.  I.e. an ``OR'' query specifies a left vertex $v$ and a set $S$
of right vertices and asks whether there exists an edge between $v$ and some vertex in $S$. This binary query 
is clearly weaker than the demand query, but it is not difficult to
use binary search to simulate demand queries using $O(\log n)$ ``OR'' queries.  It will be convenient to prove our
lower bound in the binary ``OR'' query model which them implies the lower bound in the demand query model.

A slightly stronger model (which is incomparable to the communication complexity model, but generally seems 
rather weaker) would allow asking about the ``OR'' of an arbitrary set of edges, not just those connected to
a single vertex.  I.e. such a query can present an arbitrary subset of $n^2$ possible edges and ask whether at least
one of them exists in the graph.  It is known \cite{LM19} that such a model is equivalent, up to log factors, to the
following complexity measure in Boolean decision trees: the maximal number of 1-answers on the path to any leaf of
the tree.  

An even stronger model would allow asking for an $OR$ of a subset of edges and their negations (e.g. ``is 
either $(u_1,v_1)$ an edge in the graph OR $(u_2,v_2)$ not an edge?'').  It can be shown
that this model captures, up to a log factor, the logarithm of minimum Boolean decision tree {\em size}.  
We do not know any improved bounds in these models and  
proving an improved upper bound in one of them is an open problem as well.

\subsection{Parallel Algorithms in the Demand Query Model}

Up to this point we have introduced our general model, its motivation, and the main open problem.  We now
move onto the issue of the parallel complexity in this model, which is our main technical result.

A parallel algorithm in the demand query model proceeds in rounds, where at each round multiple demand 
queries can be made.
The point is that the identity of all the queries within a single round must all be determined only based on the
outcomes of queries from the previous rounds.  The two basic parameters in a parallel algorithm are thus the
number of rounds and the (maximum allowed) number of queries per round.  Any sequential algorithm that
makes a total of $q$ queries can be viewed as a parallel algorithm with $q$ rounds, each of 1 query.  
Every
parallel algorithm with $r$ rounds of $q$ queries each, can be trivially converted to a
sequential algorithm with $r \cdot q$
queries, but the converse is not necessarily true.  

Any problem (on graphs with with $n^2$ possible edges) can be trivially solved with a single round 
of $n^2$ queries.  The question that we ask is whether a maximum matching in a bipartite graph 
(or at least an approximation thereof) can be found
with ``few'' rounds or ``few'' queries each, say 
is $n^{o(1)}$ rounds of $n^{1+o(1)}$ demand queries per
round.  (Recall that the trivial lower bound on the total number of queries
is $\Omega(n)$.)  In \cite{DNO14} a {\em randomized} algorithm was presented that operates within fully within this model
and within $O(\log n / \epsilon^2)$ rounds, each of $n$ demand queries (a single query per left vertex)
outputs, with high probability, a matching whose size is at least $(1-\epsilon)$ fraction of the maximum
matching.  

\vspace{0.1in}
\noindent
{\bf Open Problem 2:}
Does there exists a randomized parallel algorithm that uses ``few'' 
(even $O(\log n)$)
rounds each with ``few'' (even $O(n)$) demand queries and outputs, with high probability, 
a perfect matching if one exists.
\vspace{0.1in}

\subsection{Main Result: A Lower Bound for Deterministic Parallel Algorithms}

Our main result is a lower bound showing that randomization is crucial here: no deterministic algorithm
that uses ``few'' rounds of ``few'' queries can get a decent approximation to the maximum matching size.
In fact our lower bounds show that the algorithms cannot even distinguish between graphs with a perfect
matching and those with a small maximum matching size.
We provide a general tradeoff between the number of rounds, the number of queries per round, and the level of
approximation possible which in particular implies:

\vspace{0.1in}
\noindent
{\bf Instance of main theorem with a specific choice of parameters:}
Deterministic algorithms with $n^{1/7}$-rounds each using $n^{8/7}$ demand queries cannot approximate 
the maximum
matching size within a factor of $n^{1/7}$.
\vspace{0.1in}

Beyond tightening our tradeoffs, a natural challenge is to extend our lower bound to the stronger communication
complexity model.   Our lower bound uses a direct adversary techniqe which does not seem to extend to the communication
model.

\vspace{0.1in}
\noindent
{\bf Open Problem 3:}
Does there exists a deterministic {\em communication protocol} between the $n$ agents (left vertices) 
that uses ``few'' (even $O(\log n)$)
rounds of communication where in each round each player sends ``few'' (even $O(1)$) bits
of communication and outputs, with high probability, a perfect matching if one exists.
\vspace{0.1in}

Improving upon \cite{DNO14, ANRW15}, in \cite{BO17} 
a nearly logarithmic lower bound on the number of rounds was proven if each round
allows, say, $polylog(n)$ bits of communication per player.  This bound is exponentially
lower than the query bound that we obtain, and unlike our bound, applies also to randomized 
protocols.  A gap between deterministic and randomized {\em simultaneous} communication protocols was
exhibited in \cite{DNO14}.

\section{Some algorithms}

\subsection{Warmup: The Greedy Online Algorithm}

We start by looking at the simplest greedy algorithm that also has the advantage of being ``online'' \cite{KVV90}: we can
have the left vertices come one after another in an online fashion, and as each vertex arrives we immediately 
match him to an arbitrary yet unmatched right vertex.  This can be done using a single demand query per 
vertex: the query will place all the unmatched vertices (in an arbitrary order) before all the matched ones.
This is known to produce a matching whose size is at least 1/2 of the maximum size.  A randomized variant 
fixes a random order on the right vertices and allocates the first 
unmatched vertex rather in this order than an arbitrary one (still a single demand query per vertex, 
but now with the fixed random
order on the yet unmatched vertices), and it is known that this produces, in expectation, a $(1-1/e)$-approximation
to the maximum matching \cite{KVV90}. 

\begin{proposition}
There is a deterministic online algorithm for maximum matching
that uses 1-demand query per vertex and provides a 1/2-approximately maximum matching.  
There is a randomized online algorithm for maximum matching
that uses 1-demand query per vertex and provides a (1-1/e)-approximately maximum matching.
\end{proposition}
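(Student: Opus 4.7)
The plan is to handle the deterministic and randomized parts separately, since the matching-quality bounds are classical; the only genuinely new content is showing that each online step can be carried out by a \emph{single} demand query. I would maintain throughout a set $U$ of right vertices that are currently unmatched.

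For the deterministic statement, when a new left vertex $v$ arrives I would issue one demand query with the following order: first list the vertices of $U$ in an arbitrary fixed order, and then append the currently matched right vertices (in any order). If the returned index $i$ points into the $U$-prefix, match $v$ to $u_i$ and remove $u_i$ from $U$; if it points beyond the $U$-prefix or returns $0$, leave $v$ unmatched. Since vertices in $U$ are strictly preferred, the demand query returns an unmatched neighbor whenever one exists, so the output matching $M$ is maximal in the graph induced by the arriving left vertices. The $1/2$-bound then follows from the classical charging argument: every edge of an optimal matching $M^*$ must share an endpoint with some edge of $M$ (else greedy would have extended $M$), and each edge of $M$ is charged by at most two edges of $M^*$, giving $|M^*| \le 2|M|$.

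For the randomized statement, I would draw a uniformly random permutation $\pi$ of the right vertices at the outset and run exactly the same procedure, except that the $U$-prefix in the query order is listed according to $\pi$ rather than arbitrarily. Again a single demand query per arriving left vertex suffices, and when it returns an unmatched vertex, that vertex is the $\pi$-minimum available neighbor of $v$. This is precisely the RANKING algorithm of Karp, Vazirani and Vazirani, so the expected $(1-1/e)$-approximation guarantee is immediate from \cite{KVV90}.

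I do not anticipate a real obstacle here; the one point worth verifying is that when the demand query returns a matched right vertex we are justified in discarding $v$. This is correct precisely because $U$ was placed strictly before the matched right vertices in the query order, so such a return certifies that $v$ has no neighbor in $U$ and is consistent with greedy/RANKING behavior. Everything else is either definition-chasing or citation of the standard analyses.
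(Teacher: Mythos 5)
Your proposal is correct and matches the paper's own argument: the paper likewise places all currently unmatched right vertices first in the query order (arbitrarily for the deterministic greedy, in a fixed random order for the RANKING variant) so that one demand query per arriving left vertex retrieves an available neighbor if one exists, and then cites the classical $1/2$ and $(1-1/e)$ analyses of \cite{KVV90}. No substantive difference.
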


\subsection{The Ascending Auction Algorithm}

We now present the ``ascending auction'' algorithm of 
\cite{DGS86,B88} which 
is naturally described in the demand-query model, and may be viewed as motivating this model.\footnote{This algorithm is quite naturally interpreted as an
ascending auction that can be used also in more general scenarios \cite{KC82} and in
our scenario turns out to also has nice incentive properties \cite{DGS86}.}  It will be beneficial to present the
algorithm as an approximation algorithm that is parametrized by the desired approximation ratio.

\begin{theorem}
For every parameter $0 < \epsilon < 1$, there exists an algorithm that uses $O(n/\epsilon)$ demand queries and
produces a matching whose size is at least $(1-\epsilon)$ fraction of the maximum size matching.
\end{theorem}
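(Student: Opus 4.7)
My plan is to implement the ascending auction of Bertsekas/DGS directly in the demand query model. Maintain a matching $M$ (initially empty) and prices $p_u \geq 0$ on the right vertices (initially zero), together with a work queue of currently-unmatched left vertices. When $v$ is popped, issue a single demand query on $v$ with the right vertices ordered by increasing price; this returns $v$'s cheapest neighbor $u$, if any. If no such $u$ exists or if $p_u \geq 1-\epsilon$, permanently drop $v$. Otherwise raise $p_u \leftarrow p_u + \epsilon$, add $(v,u)$ to $M$, and if $u$ was already matched to some $v'$, remove that edge and push $v'$ back to the queue.

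For the query bound I classify every query as either a ``drop'' (at most $n$ in total, since each left vertex is dropped at most once) or a ``match'' that raises some price by exactly $\epsilon$. A right vertex only acquires positive price by being bid upon, and remains matched afterwards (a subsequent bid replaces its partner rather than unmatching it), so at termination only matched right vertices carry positive price, and every price is strictly less than $1$ since the last raise was from a value $<1-\epsilon$. Hence $\sum_u p_u < |M| \leq n$, bounding the number of match queries by $n/\epsilon$ and the total by $O(n/\epsilon)$.

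For the approximation ratio I invoke LP duality on the matching LP $\max \sum_e x_e$ subject to $\sum_{e \ni w} x_e \leq 1$. Set $y_u = p_u$ for each right vertex, $y_v = 1 - p_{M(v)}$ for each matched left vertex (nonnegative since $p_{M(v)} < 1$), and $y_v = 0$ otherwise. Using that unmatched right vertices have zero price, telescoping yields $\sum y = |M|$. I then verify that $y_u + y_v \geq 1-\epsilon$ on every edge $(v,u)$: for dropped $v$ this is the drop rule itself, and for matched $v$ with $u \neq M(v)$ it reduces to the inequality $p_u \geq p_{M(v)} - \epsilon$. Applying weak duality to the primal assignment $x_e = 1$ on $e \in M^*$ with the dual relaxed by the factor $(1-\epsilon)$ then gives $|M| = \sum y \geq (1-\epsilon)|M^*|$.

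The main technical point will be verifying the inequality $p_u \geq p_{M(v)} - \epsilon$ in the dual feasibility step. The key observation is that the current price $p_{M(v)}$ is exactly its value immediately after $v$'s most recent successful bid on $M(v)$: had anyone else bid on $M(v)$ since, $v$ would have been displaced and would not now be matched to $M(v)$. Therefore the pre-bid price $p_{M(v)}-\epsilon$ was the minimum over $v$'s neighbors at that moment, so every other neighbor $u$ had price at least $p_{M(v)} - \epsilon$ then and, since prices are monotone non-decreasing over the run of the algorithm, still has price at least $p_{M(v)} - \epsilon$ now.
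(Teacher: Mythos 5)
Your proposal is correct and is essentially the paper's own argument: the same ascending-auction algorithm, the same query accounting (each query either permanently drops a left vertex or raises some price by $\epsilon$, with total price mass at most $n$), and the same key invariant that a matched left vertex's partner is within $\epsilon$ of its cheapest neighbor while a dropped vertex sees only expensive neighbors. Your LP-duality packaging of the final step is just the dual-variable restatement of the paper's utility-summation argument (your $y_v$ is exactly the paper's $util(v,M)$), so the two proofs coincide.
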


Here is the algorithm essentially due to \cite{DGS86}:

\begin{enumerate}
\item For each left vertex $u$ initialize ``prices'' $p_u=0$.
\item Initialize the matching $M=\emptyset$.
\item Initialize the set of ``TBD'' vertices $A$ to be the set of all right vertices.  
\item While $A$ is not empty:
      \begin{enumerate}
			\item Pick an arbitrary vertex $v \in A$ and remove it from $A$.
			\item Ask a demand query on $v$ with the order on $u$'s induced by increasing values of $p_u$ (lowest price first, and ties broken arbitrarily), and let $u$ be the answer to that query.
			\item If $p_u < 1$: 
			    \begin{enumerate}
					\item If for some vertex $v'$ we have that $(v',u)\in M$, insert $v'$ into $A$ and remove $(v',u)$ from $M$.
					\item Insert $(v,u)$ into $M$.
					\item Increase $p_u$ by $\epsilon$
					\end{enumerate}		
			\end{enumerate}
\item Output $M$
\end{enumerate}

For completeness, the correctness of this algorithm is sketched below.  (For ease of exposition let us assume that 
$\epsilon = 1/t$ for some $t$.)
 
\begin{proof}
The invariant that we keep is that for every left vertex $v \not\in A$ we have one of the following
two cases: either (1) for all existing edges
$(v,u)$ we have that $p_u = 1$ or (2) for some edge $(v,u)$ we have that there exists a unique
right vertex $u$ such that $(v,u)\in M$
and for all other edges $(v,w)$ it holds that $p_w \ge p_u-\epsilon$.

For an arbitrary matching $N$ let us define $util(v,N)=1-p_u$ for $(v,u)\in N$ and $util(v,N)=0$ is $v$
is unmatched in $N$.  It is easy to see that our invariant ensures that for any matching $N$ we have
that $util(v,N) \le util(v,M) + \epsilon$ (where $M$ was the matching produced by the algorithm), and in fact
for $v$ that is unmatched in $N$ we don't even loose $\epsilon$: $util(v,N) = 0 \le util(v,M)$.
Let us now sum $util(v,N)$ up over all left vertices $v$: we get $|N|$ times 1 minus $\sum p_u$ over all $u$
that are matched in $N$ which is bounded from above by $\sum p_u$ over {\em all} right vertices $u$.  For the case of the matching $M$ found by the algorithm any unmatched right vertex $u$ still has $p_u=0$ so we get exactly 
$\sum p_u$ over all left vertices $u$.  We thus have 
$|N|-\sum_u p_u \le \sum_v util(v,N) \le \sum_v util(v,M) + |N|\epsilon = |M| - \sum_u p_u + |N|\epsilon$.
It follows that $|M| \ge (1-\epsilon)|N|$.
\end{proof}

In terms of the running time, every iteration of the main loop makes a single demand query and either increases the
price $p_u$ of some right vertex by $\epsilon$ or removes a left vertex (forever) from $A$.  There can clearly
be at most $n$ iterations that remove a vertex and, since the price of any
right vertex never increases above 1, there can be at most $n/\epsilon$ price increase iterations, for
a total running time of $O(n/\epsilon)$.

In terms of obtaining the perfectly maximum matching, one could take $\epsilon = 1/(n+1)$ which 
due to the integrality of the matching size would imply that
the algorithm produces the maximum matching.  This makes sense as a way of obtaining a ``usual'' $O(n^3)$-time
algorithm for maximum matching (when each demand query is executed in $O(n)$ time), but in our model this would
require $O(n^2)$ demand queries which in our model is trivial.

\subsection{Augmenting Paths}

We now present a variant of
the directed connectivity problem, which is used as a step (``augmenting path'') in many algorithms for bipartite matching: We are
given some mapping from the right vertices to the left vertices $\pi:R \rightarrow (L \cup \{\Lambda\})$ and are given
a subset $S$ of left vertices.  
Our goal is to find a directed path $(v_1, u_1, v_2, u_2, ... , v_k, u_k)$
such that $v_1 \in S$, $\pi(u_k)=\Lambda$, and 
for every $1 \le i < k$ we have that the edge $(v_i, u_i)$ is in the graph and 
$\pi(u_i) = v_{i+1}$, or say that none exists.

\begin{lemma}
For every $S$ and $\pi$, this problem can be solved with $O(n)$ demand queries.
\end{lemma}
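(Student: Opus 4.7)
The plan is to run a BFS in the natural auxiliary directed graph $G'$ whose vertex set is $L\cup R$ and whose edges are: from every $v\in L$, a forward edge to each $u\in R$ with $(v,u)$ in the original graph; and from every $u\in R$, a single forward edge to $\pi(u)$ (absent when $\pi(u)=\Lambda$). A path of the type required by the lemma is precisely a directed path in $G'$ from some $v_1\in S$ to some $u_k\in R$ with $\pi(u_k)=\Lambda$. So the task reduces to directed reachability from $S$ to the set of ``sink'' right vertices, together with recovering a witnessing path via parent pointers.

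I would implement BFS as follows. Maintain two sets, \emph{visited} left and right vertices (initially $S$ on the left, empty on the right) and a FIFO queue of left vertices to process (initially $S$). When popping a left vertex $v$ from the queue, I repeatedly issue a demand query on $v$ using the order that places all currently unvisited right vertices before all visited ones (ties broken arbitrarily). If the query returns $0$ or returns a vertex that is already visited, then $v$ has no edges to any unvisited right vertex, and I discard $v$. Otherwise the query returns some unvisited $u$; I mark $u$ visited and record $v$ as its parent. If $\pi(u)=\Lambda$ I halt and reconstruct the path by walking parent pointers from $u$ back through $S$. Otherwise, if $\pi(u)\in L$ is not yet visited, I mark it visited, record $u$ as its parent, and enqueue it. If the BFS queue empties without finding a $u$ with $\pi(u)=\Lambda$, no such path exists.

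The query-count analysis is a standard amortization. Each demand query falls into exactly one of two categories: either (a) it returns a previously unvisited right vertex, which then becomes permanently visited, or (b) it returns $0$ or an already-visited right vertex, in which case the current left vertex is removed from further consideration. Category (a) can occur at most $|R|\le n$ times in total (once per right vertex), and category (b) can occur at most once per left vertex that ever enters the queue, hence at most $|L|\le n$ times. Thus the total number of demand queries is $O(n)$. Correctness of the BFS and of path reconstruction are standard, once one checks that the order of edges in $G'$ imposed on any BFS tree indeed yields a valid alternating sequence of left/right vertices; this follows because every right vertex in the tree is reached directly from its parent $v\in L$ by a real graph edge, and every left vertex in the tree other than the $S$-root is reached via $\pi$ from its parent right vertex.

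I do not expect any serious obstacle. The only mildly subtle point is making sure that we never waste queries on left vertices whose incident right-neighborhood has already been fully absorbed into the visited set; placing visited right vertices \emph{after} unvisited ones in the demand query order ensures that a single ``wasted'' query per left vertex suffices to detect this and retire the vertex, which is exactly what the $O(n)$ bound requires.
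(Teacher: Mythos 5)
Your proof is correct and is essentially the paper's own argument: a BFS in which each demand query orders the right vertices so that ``useful'' ones come first, charged so that every query either discovers a new vertex or permanently retires a left vertex, giving $O(n)$ queries in total. The only difference is bookkeeping --- you track visited right vertices directly, whereas the paper orders right vertices $u$ by the status of $\pi(u)$ (whether it is $\Lambda$ or already in $Q\cup D$) --- and both variants are valid.
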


\begin{proof}
We will implement breadth first search in the demand query model:

\begin{enumerate}
\item Initialize a FIFO Que $Q$ with all left vertices in $S$, and initialize a set of ``discarded'' 
left vertices $D=\emptyset$.
\item Repeat until $Q = \emptyset$:
\begin{enumerate}
\item Let $v$ be the first vertex in $Q$.
\item Pick an arbitrary order of the right vertices such that all vertices $u$ with $\pi(u)=\Lambda$
appear before all others and then appear all vertices with $\pi(u) \not\in (D \cup Q)$ and last appear those with
$\pi(u) \in (D \cup Q)$, and ask a demand query
from $v$ in this order.  
\item If $\pi(u)=\Lambda$ Then output the path leading to $v$ and then $u$ and halt.
\item If $\pi(u) \not\in (D \cup Q)$ Then enque $\pi(u)$ into $Q$, attaching to it the path first leading to 
$v$, then $u$ and then $\pi(u)$.
\item If $\pi(u) \in (D \cup Q)$ Then remove $v$ from $Q$ and insert $v$ into $D$.
\end{enumerate}
\item If a path was not found and the loop terminated due to $Q$ being empty, there is no such path.
\end{enumerate}

In each round we make a single demand query and either insert a new vertex into $Q$ (which we can do
at most $n$ times) or remove an element from $Q$ (which again we can do at most $n$ times). 
\end{proof}

Note: We could have alternatively simulated depth first search.  
It is not difficult to prove a matching lower bound for this problem.

This version of the connectivity problem is exactly what is needed for an augmenting path step used in many 
matching algorithms: 
A partial matching defines our mapping $\pi$ by having $\pi(u)$ be defined as the vertex that is matched to it 
in the partial matching (and $\Lambda$ is $u$ is not matched) and defines the set $S$ of unmatched left vertices. 
As it is known that a matching in a bipartite graph has maximum size if and only it admits no augmenting
path, this gives a $O(n)$ algorithm for testing whether a given matching has maximum size.

\subsection{An $O(n^{3/2})$-query algorithm}

We now have all the ingredients for describing our best algorithm for maximum matching in this model.

\begin{theorem}
There exists an algorithm for maximum matching in a bipartite graph that uses $O(n^{3/2})$ demand queries.
\end{theorem}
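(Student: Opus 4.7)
The plan is to combine the two algorithmic primitives already developed: the ascending auction (which yields a $(1-\epsilon)$-approximate matching in $O(n/\epsilon)$ demand queries) and the augmenting-path subroutine (which, given any current matching, either finds an augmenting path or certifies maximality using $O(n)$ demand queries).

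First I would run the ascending auction with the parameter set to $\epsilon = 1/\sqrt{n}$, which by the earlier theorem uses $O(n/\epsilon) = O(n^{3/2})$ demand queries and produces a matching $M$ satisfying $|M| \ge (1-1/\sqrt{n})|M^*|$, where $M^*$ is a maximum matching. Since $|M^*| \le n$, this means that $|M^*| - |M| \le \sqrt{n}$, so $M$ is ``nearly'' maximum in an additive sense.

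Next I would iteratively improve $M$ by augmenting paths: repeatedly invoke the augmenting-path lemma on the current matching (taking $\pi$ to be the matching map and $S$ to be the set of unmatched left vertices). Each successful call either returns an augmenting path, whose XOR with $M$ yields a matching of size $|M|+1$, or certifies that $M$ is already maximum. Since we need to climb at most $\sqrt{n}$ rungs to reach $|M^*|$, there are at most $\sqrt{n}+1$ such iterations, each using $O(n)$ demand queries, for a total of $O(n^{3/2})$ demand queries in this phase.

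Adding the two phases gives $O(n^{3/2}) + O(n^{3/2}) = O(n^{3/2})$ demand queries overall. No step here is really an obstacle since both primitives are already in hand; the only thing to double-check is the choice $\epsilon = 1/\sqrt{n}$ and the fact that the additive gap $|M^*|-|M|$ (not merely the multiplicative gap) is what controls the number of augmentations, which follows from $|M^*| \le n$. I would also note, for bookkeeping, that maintaining $M$ and the mapping $\pi$ between augmentations is free in the demand-query cost model, so no additional queries are hidden in the update step.
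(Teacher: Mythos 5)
Your proposal is correct and is essentially identical to the paper's proof: run the ascending auction with $\epsilon = 1/\sqrt{n}$ to get within additive $\sqrt{n}$ of the maximum, then close the gap with at most $\sqrt{n}$ augmenting-path steps of $O(n)$ queries each. Your explicit note that the additive (not multiplicative) gap, via $|M^*|\le n$, is what bounds the number of augmentations is exactly the right point to check and matches the paper's reasoning.
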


This is obtained by first running the ascending auction algorithm with $\epsilon=1/\sqrt{n}$.  This requires
$O(n^{3/2})$ demand queries and produces a matching whose size is at least $(1-1/\sqrt{n})$ times
the maximum size.  We then run a sequence of augmenting path steps, each of which requires $O(n)$ additional
demand queries.  Notice that every augmenting path step increases the matching size by 1, and we started 
with a matching that can be smaller than the maximum matching by at most an additive $O(\sqrt{n})$,
at most $\sqrt{n}$ augmenting path steps are needed before the maximum matching is obtained, for a total
of at most $O(n^{3/2})$ demand queries.

\subsection{A Randomized Parallel Algorithm}

In the parallel verion of our model
we proceed by {\em rounds}, where at each round a set of 
queries is asked, a set that may be determined by the answers to the queries 
from previous rounds.  An $r$-round $q$-query-per-round protocol is one with at most $r$ rounds, where at each
round at most $q$ demand queries are made. 

\begin{theorem}(Dobzinski-Nisan-Oren) for any $\delta>0$ there exists a {\em randomized}
$O(\log n/\delta^2)$-round protocol where at each round there is a single demand query for every left vertex
(for a total of $n$ demand queries per round)
which returns a matching of size at least $(1-\delta)$-fraction
of the optimal matching.
\end{theorem}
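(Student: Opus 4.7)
The plan is to parallelize the ascending auction of Theorem 2 with step size $\epsilon=\delta$. Maintain prices $p_u$ on right vertices, initialized at $0$, and a partial matching $M$ that is modified across rounds. In each round, every left vertex $v$ issues in parallel a single demand query with the order on right vertices given by increasing $p_u$ (ties broken by a fresh random bit attached to each right vertex in each round), returning some $u(v)$. For each right vertex $u$ contested by the set $S_u=\{v:u(v)=u\}$, pick one winner $v^*\in S_u$ uniformly at random, evict whoever was previously matched to $u$ (they return to the pool for the next round), place $(v^*,u)$ into $M$, and raise $p_u$ by $\delta$ (capped at $1$). This uses exactly one demand query per left vertex per round.

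For the approximation guarantee I would reuse the invariant from the proof of Theorem 2 verbatim: whenever $(v,u)\in M$, every other neighbor $w$ of $v$ satisfies $p_w\ge p_u-\delta$, because $v$ won $u$ in some round when $u$ was its cheapest available neighbor and prices only rise thereafter. Summing $util(v,N)\le util(v,M)+\delta$ over all $v$ for any alternative matching $N$, and canceling $\sum_u p_u$ on both sides, gives $|M|\ge (1-\delta)|N|$, so the returned $M$ is a $(1-\delta)$-approximation once the algorithm halts.

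For the round bound, the plan is a potential argument on $\Phi=\sum_u p_u\in[0,n]$. A round in which $\Omega(\delta n)$ right vertices are contested increases $\Phi$ by $\Omega(\delta^2 n)$, so ``busy'' rounds number at most $O(1/\delta^2)$. In the remaining ``light'' rounds few items are contested, and I would use the fresh per-round randomness to show that with constant probability a constant fraction of the current collisions are resolved permanently: an evicted loser $v\in S_u\setminus\{v^*\}$ either wins its next-best demand (whose price is still close to $p_u$ by the invariant) or gets retired because its entire neighborhood has reached the ceiling $p_w=1$. Concentration over $O(\log n)$ halving epochs at each of $O(1/\delta)$ price windows gives an $O(\log n/\delta^2)$ round bound with high probability.

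The hard part is precisely this conflict-halving step: it requires ruling out long oscillations where two or more evicted agents keep bumping into each other at successive prices. The cleanest route I see is to treat the set of ``unhappy'' left vertices (those whose current match undershoots their demand by more than $\delta$) as a supermartingale under the independent random tie-breaks, using the fact that each collision resolves in favor of any particular participant with probability $1/|S_u|\ge 1/n$ per round and is amplified to constant probability over a logarithmic window. Combined with the approximation invariant, the high-probability $O(\log n/\delta^2)$ round bound then yields a $(1-\delta)$-approximate matching, as claimed.
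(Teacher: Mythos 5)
Your algorithm is essentially the one the paper presents (the paper only states the algorithm and attributes the analysis to Dobzinski--Nisan--Oren without reproducing it), so everything rides on whether your sketched analysis closes the argument. It does not, in two places. First, the approximation guarantee: the invariant from Theorem 2 controls $util(v,N)-util(v,M)$ only for left vertices that have been \emph{resolved} --- matched in $M$, or retired because every neighbor's price has reached $1$. A left vertex that is still unmatched and unretired when the round budget $O(\log n/\delta^2)$ runs out can have neighbors at low prices, and for such a $v$ the inequality $util(v,N)\le util(v,M)+\delta$ simply fails ($util(v,M)=0$ while $util(v,N)$ may be near $1$). So the $(1-\delta)$ bound is not a consequence of the invariant alone; it additionally needs that, with high probability, only a $\delta$-fraction (relative to OPT) of left vertices remain active after the stated number of rounds. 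That is precisely the statement you defer to the ``conflict-halving'' step, so the two halves of your proof are not independent: the approximation claim is hostage to the round bound.

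Second, the round bound itself. The busy-round potential argument is fine ($\Phi=\sum_u p_u\le n(1+\delta)$ and each busy round gains $\Omega(\delta^2 n)$, giving $O(1/\delta^2)$ such rounds), but the light-round case is where all the content lies, and your proposed repair is quantitatively broken: a per-round resolution probability of $1/|S_u|\ge 1/n$ amplified over a window of $O(\log n)$ rounds yields success probability $O(\log n/n)$, not a constant --- boosting $1/n$ to a constant requires $\Omega(n)$ independent trials, which would destroy the round bound. Note also that the randomness doing the work in the cited analysis is the tie-breaking inside the \emph{demand query} (a bidder with many minimum-price neighbors spreads its demand uniformly over them), not the choice of winner among colliding bidders; the paper's algorithm even picks the winner arbitrarily. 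It is this spreading of demand that rules out the long oscillations you worry about, and your sketch never invokes it. As written, the supermartingale step is a restatement of what must be proved rather than a proof of it.
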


The algorithm is a randomized parallel variant of the ascending auction algorithm:

\begin{enumerate}
\item For each left vertex $u$ initialize ``prices'' $p_u=0$.
\item Initialize the matching $M=\emptyset$, and the discarded vertices $D=\emptyset$.
\item Repeat $O(\log n/\delta^2)$ times:
\begin{enumerate}
\item For each vertex $v \not\in D$ that is currently unmatched in $M$, in parallel:
      \begin{enumerate}
			\item Ask a demand query on $v$ with the order on $u$'s induced by increasing values of $p_u$, 
			with ties broken {\em randomly}, and let $u_v$ be the answer to that query.
			\item If $p_{u_v} > 1$: insert $v$ into $D$.		    		
			\end{enumerate}
\item For each $u$ such for some $v$ we have that $u=u_v$, increase the price $p_u$ by $\delta$ and
pick an arbitrary $v$ such that $u=u_v$ and insert
$(v,u)$ into the matching $M$, removing any previous edge matched to $u$, if any.
\end{enumerate}
\end{enumerate}

\section{The Lower Bound}

It will be more convinient to prove our lower bound with a slightly weaker query, the OR query, which turns out
to be essentially equivalent to a demand query, (up to log factors).

\begin{definition}
An {\em OR Query} accepts a left vertex $v$ and a subset $S$ of the right vertices and 
returns whether there exists a vertex
$u \in S$ such that $(v,u)$ is an edge in the graph.
\end{definition}

While it is clear that a demand query is stronger than an OR query, it turns out that the gap between them is not large:

\begin{lemma}
A demad query can be simulated by $\log_2 (n+1)$ OR queries.
\end{lemma}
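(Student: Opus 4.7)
The plan is to implement a demand query by a binary search using OR queries, matching the information-theoretic floor: a demand query has $n+1$ possible outcomes (an index in $\{1,\ldots,n\}$ or $0$), so any simulation by binary-valued queries must use at least $\lceil \log_2(n+1) \rceil$ of them, and I expect to achieve this.

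Concretely, I would maintain an invariant interval $[a,b] \subseteq \{1,2,\ldots,n+1\}$ guaranteed to contain the ``true answer,'' where I encode the ``no edge exists'' outcome as a phantom value $n+1$. Initially $[a,b]=[1,n+1]$. At each step, as long as $a<b$, I set $m=\lfloor(a+b)/2\rfloor$ (so $m\in[a,b-1]\subseteq[1,n]$) and issue the OR query on $\{u_a,u_{a+1},\ldots,u_m\}$. If the answer is YES then some edge from $v$ lands in this prefix, so the smallest such index is at most $m$, and I update $b\leftarrow m$. If the answer is NO then none of $u_a,\ldots,u_m$ is an edge, so the smallest edge index (or the phantom $n+1$) is at least $m+1$, and I update $a\leftarrow m+1$. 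Either way the length $b-a+1$ of the candidate interval is at least halved (rounded up).

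When the loop terminates with $a=b$, I return $a$ if $a\le n$ and $0$ otherwise. Since the interval's size starts at $n+1$ and shrinks by at least a factor of two per query, this uses at most $\lceil \log_2(n+1) \rceil$ OR queries, as required. The only mildly delicate point is that OR queries are restricted to subsets of the actual right vertices $\{u_1,\ldots,u_n\}$, which is why we take $m\le b-1\le n$; the phantom element at index $n+1$ is never queried but is correctly returned whenever the search collapses there, cleanly accounting for the ``none'' outcome without costing an extra query.
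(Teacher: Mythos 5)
Your proof is correct and follows essentially the same route as the paper's: a binary search over the ordered right vertices implemented with OR queries, halving the candidate interval each step. Your version is just a more carefully spelled-out variant (explicit interval invariant and the phantom index $n+1$ for the ``no edge'' outcome), whereas the paper sketches the same binary search informally using prefix queries.
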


\begin{proof}
We simulate each demand query $(v,(u_1...u_n))$ by a binary search that uses OR queries: we start by asking whether
there is an edge between $v$ and $\{u_1...u_{n/2}\}$, according to the answer we then either focus on the first half
of the variables (asking whether there is an edge to $\{u_1...u_{n/4}\}$) or the second half (asking whether there
is an edge to $\{u_1...u_{3n/4}\}$), etc.
\end{proof}

The overhead in the simulation is clearly optimal since a demand query has $n$ possible answers.  We will proceed
by providing our lower bounds in the OR query model, which
as we have just shown implies the same lower bounds -- with a log factor loss -- in the demand query model.

\begin{theorem}
Every deterministic $r$-round $q$-query (per round) algorithm that uses OR queries cannot distinguish between grpahs with a perfect matching (of size $n$) and those with a matching of size at 
most $\alpha n$ for
$\alpha = r \sqrt{q r \log n} / n$. 
\end{theorem}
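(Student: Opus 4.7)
The plan is a direct adversary argument. The adversary will answer OR queries online while maintaining two candidate bipartite graphs $G^+ \supseteq G^-$ that are consistent with every answer given so far, such that $G^+$ always contains a perfect matching and $G^-$ has maximum matching size at most $\alpha n = r\sqrt{qr\log n}$. Since every ``yes'' must be witnessed by an edge present in both graphs and every ``no'' forbids the queried $\{v\}\times S$ in both, any algorithm that ends on this transcript will be unable to distinguish the two cases.

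Exploiting the parallelism (the adversary sees all $q$ queries of round $t$ simultaneously before having to answer), the key gadget is to pick a single random ``handling set'' $T_t \subseteq R$ of size $k := \sqrt{qr\log n}$ before responding. The rule is: answer ``yes'' to query $(v_i^t, S_i^t)$ iff $S_i^t \cap T_t \ne \emptyset$, and commit a single witness edge $(v_i^t, u)$ with $u \in S_i^t \cap T_t$ to both graphs; otherwise answer ``no'' and exclude $\{v_i^t\}\times S_i^t$ from both. Every committed edge in $G^-$ has its right endpoint in $T := \bigcup_t T_t$, so the matching number of $G^-$ is at most $|T| \le rk = \alpha n$, as desired.

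The core of the proof is to show that, with positive probability over the random choices of the $T_t$, the graph $G^+$ (equal to $K_{n,n}$ minus all ``no''-excluded edges) still admits a perfect matching. For any query with $|S_i^t| \ge c(n/k)\log(qr)$, the probability that a uniformly random $k$-subset $T_t$ misses $S_i^t$ is at most $(qr)^{-c}$, so a union bound over all $qr$ queries shows that, with high probability, every ``large'' query is answered ``yes'' and only ``small'' queries with $|S_i^t| < c(n/k)\log(qr)$ can trigger ``no'' answers. This caps the total number of edges cut from $G^+$ at $O(qr \cdot n\log(qr)/k) = O(n\sqrt{qr\log n}) = o(n^2)$, after which a standard Hall / min-degree argument recovers a perfect matching in $G^+$.

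The main obstacle I expect is the adversarially bad case where the algorithm concentrates its small queries on a few ``hot'' left vertices, potentially wiping out almost the entire $G^+$-neighborhood of such a vertex even though the global edge-removal budget is small. My plan to handle this is to add an adaptive rule: as soon as the accumulated excluded set at some vertex $v$ would reach $(1 - 1/n)n$, commit $v$ directly to an explicit match in both $G^+$ and $G^-$, which adds one to $G^-$'s matching number but restores consistency. Each such commit requires $\Omega(k/\log(qr))$ queries to $v$, so there can be at most $O(qr\log(qr)/k) = O(\sqrt{qr\log n})$ of them in total --- an additive contribution comfortably absorbed by the $rk$ budget. Combined with the previous paragraph, this yields a $G^-$ whose matching is at most $r\sqrt{qr\log n} = \alpha n$ while $G^+$ still has a perfect matching, proving the theorem.
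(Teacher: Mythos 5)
Your overall architecture (an adversary maintaining $G^-\subseteq G^+$, a random hitting set that forces ``YES'' on all large queries, and special treatment for concentrated small queries) is the same as the paper's, but the step that is supposed to guarantee a perfect matching in $G^+$ has a genuine gap. Bounding the \emph{total} number of excluded edges by $O(n\sqrt{qr\log n})=o(n^2)$ gives you nothing via Hall's theorem, and your hot-vertex threshold of $(1-1/n)n$ only guarantees minimum degree $1$ at the surviving left vertices, which does not imply a perfect matching. Concretely, in the regime $q\gtrsim n\sqrt{r\log n}$ (which includes the paper's headline parameters $r=n^{1/7}$, $q=n^{8/7}$, where $\alpha<1$ and the theorem is non-vacuous), the algorithm can spend $n-1$ singleton queries $(v,\{u\})$ on each of $qr/n$ left vertices; each singleton misses $T_t$ with probability $1-k/n$, so such a vertex ends up with its $G^+$-neighborhood contained in $T\cup\{u_1\}$, a set of size $O(rk)$, without ever reaching your $(1-1/n)n$ trigger. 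Since $qr/n > rk$ in this regime, Hall's condition fails in $G^+$ and your adversary's invariant is destroyed. A second, related omission: a ``NO'' answer to $(v,S)$ deletes one edge at $v$ but one edge at each of the $|S|$ right endpoints, and your patch only monitors left vertices, so right vertices can also be starved.

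The fix is quantitative and is exactly what the paper does: the per-vertex budget must be \emph{fewer than $n/(2r)$ ``NO'' edges per round, for every vertex on either side}, so that after $r$ rounds every degree in $G^+$ exceeds $n/2$, which does force a perfect matching. Any vertex that would exceed this budget within a round (a ``heavy'' vertex) is protected not by committing it to a single match but by answering ``YES'' to \emph{all} its undecided edges; this keeps it at full degree in $G^+$ while contributing only one vertex to the set that can support edges of $G^-$. The counting then closes: the small queries of a round contain at most $q\theta$ edges (with $\theta=n\sqrt{\log n}/\sqrt{2qr}$), so at most $2q\theta/(n/(2r))=O(\sqrt{qr\log n})=O(\alpha n/r)$ vertices are heavy per round, matching the budget you already use for the hitting set. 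Your version, by allowing a vertex to lose essentially all of its edges before intervening, cannot recover Hall's condition no matter how the surviving hot vertices are matched.
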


\begin{corollary}
Deterministic algorithms with polylog rounds of polylog demand queries per-player can 
only find matching of size $\tilde{O}(\sqrt{n})$ in a graph that has a perfect matching.
\end{corollary}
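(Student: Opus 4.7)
The plan is to use a potential-function adversary argument based on a hidden random structure. The adversary maintains (implicitly) a uniformly random subset $T\subseteq R$ of size $\alpha n$ together with a perfect matching $\pi:L\to R$, and will produce, at the end of the algorithm, two graphs $G_1,G_2$ with identical transcripts: $G_1$ will contain $\pi$ and therefore a perfect matching, while $G_2$ will be a subgraph of $L\times T$ (or a similar structure) with maximum matching at most $\alpha n$. Since the algorithm is deterministic, identical transcripts force identical outputs on $G_1$ and $G_2$, which proves the indistinguishability claim.

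Concretely, I would have the adversary answer each OR query $(v,S)$ using the rule ``yes'' iff $S\cap T\ne\emptyset$ (plus any edge of $\pi$ already committed that lies inside $S$). Let $N_v := \bigcup\{S : (v,S)\text{ was answered no}\}$; by construction $N_v\subseteq R\setminus T$. To produce $G_1$ one then needs a perfect matching in the bipartite graph whose edges from $v$ are $T\cup ((R\setminus T)\setminus N_v)$. By Hall's theorem this reduces to showing that $\max_v |N_v|$ is comfortably smaller than $n$: the $\alpha n$ left vertices with largest $N_v$ can be matched arbitrarily into $T$ since $L\times T$ is complete, and the remaining $(1-\alpha)n$ left vertices admit a matching into $R\setminus T$ avoiding their $N_v$'s whenever $\max_v|N_v|\le \alpha n$ and $\alpha$ is a small constant (this is the routine combinatorial step).

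The heart of the proof is a round-by-round bound on the growth of $|N_v|$. Fix round $i$ and condition on the transcript up to round $i-1$; the $q$ queries $(v_j,S_j)$ of round $i$ are now deterministic and all randomness comes from $T$'s conditional distribution. Define $\Delta_v^i := \sum_{j:v_j=v,\ S_j\cap T=\emptyset}|S_j|$. Each summand is a $\{0,|S_j|\}$-valued indicator, and the conditional probability of a ``no'' decays like $(1-\alpha)^{|S_j|}$, so queries with $|S_j|\gg (\log n)/\alpha$ contribute nothing except with probability $n^{-\omega(1)}$, while queries with smaller $|S_j|$ contribute at most $(\log n)/\alpha$ each. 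Applying a Chernoff / Azuma-style concentration inequality to $\Delta_v^i$ (with the conditioning on the past transcript absorbed into a bounded-differences martingale) I would aim to show $\Pr[\Delta_v^i \ge \sqrt{qr\log n}] \le n^{-C}$ for a sufficiently large constant $C$. A union bound over the $n$ vertices $v$ and $r$ rounds then yields $|N_v|=\sum_{i=1}^r\Delta_v^i \le r\sqrt{qr\log n}=\alpha n$ simultaneously for all $v$ with positive probability, completing the argument.

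The main obstacle I anticipate is precisely this concentration step under adaptivity. First, the algorithm may concentrate many round-$i$ queries on a single left vertex to try to inflate $\Delta_v^i$; the simple expectation calculation on $L\times T$ alone is not strong enough to yield the theorem's tight $\alpha$, which suggests the adversary's hidden structure must add enough entropy (e.g. a random sub-matching of $\pi$, or a second layer of randomness inside $T$) that no single $v$'s neighborhood is revealed by $q$ queries alone. Second, the distribution of $T$ conditioned on the past transcript is not uniform, so the Chernoff bound must be applied against the conditional distribution each round; the cleanest route is probably a ``lazy'' adversary that reveals $T$ incrementally, maintaining at each round a large family of candidate $T$'s on which the concentration step can be applied without conditioning artifacts. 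The remaining pieces, namely the Hall-type verification for the extension of $G_2$ to a perfect matching $G_1$, are routine once the $|N_v|$ bound is in place.
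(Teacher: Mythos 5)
Your plan---a randomized adversary that hides a set $T\subseteq R$ of size $\alpha n$ and answers an OR query ``yes'' iff it meets $T$---is a genuinely different route from the paper, whose adversary is deterministic and works round by round: it splits the round's queries into ``big'' and ``small'' ones, hits every big query with a small set of vertices all of whose incident edges are declared present, and answers small queries by making present every edge at a ``heavy'' vertex and absent all others, so that each vertex (left \emph{or} right) loses fewer than $n/(2r)$ edges per round while the declared-present edges touch only $O(\alpha n/r)$ vertices per round. The first genuine gap in your version is the Hall step, which you call routine but which is false as stated: bounding $\max_v|N_v|$ controls only how many right vertices each left vertex excludes, not how many left vertices exclude a fixed right vertex. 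If the algorithm's first round asks the $n$ queries $(v,\{u_0\})$ for every left vertex $v$ and one fixed $u_0$, then with probability $1-\alpha$ your adversary answers ``no'' to all of them, every graph consistent with the transcript isolates $u_0$, and no consistent $G_1$ with a perfect matching exists, even though every $|N_v|=1$. Your parenthetical about protecting a hidden matching $\pi$ points at the right repair (take $G_1=\pi\cup(L\times T)$, which is consistent provided no $\pi$-edge is ever answered ``no,'' and then bound the number of left vertices $v$ with a query that contains $\pi(v)$ but misses $T$, since only those force edges of $G_2$ outside $L\times T$), but you never specify when $\pi$-edges are committed, and your written Hall argument does not use $\pi$ at all.

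The second gap is the quantitative core: the claim $\Pr[\Delta_v^i\ge\sqrt{qr\log n}]\le n^{-C}$ is simply false against the adaptive behavior you yourself anticipate. A round that spends all $q$ queries on a single left vertex $v$, using disjoint sets of size $\Theta(1/\alpha)$ that cover $R$ (feasible since $q\ge\alpha n$ here), receives ``no'' on a constant fraction of them and forces $\Delta_v^i=\Omega(n)\gg\sqrt{qr\log n}$; so $\max_v|N_v|\le\alpha n$ cannot be established, and indeed is not the right target. You defer the difficulty to ``the adversary's hidden structure must add enough entropy,'' but that is a restatement of the theorem rather than a proof. With the $\pi$-based construction the quantity that actually needs concentration is the number of queries that hit $\pi(v)$ while missing $T$ (roughly $(|S|/n)(1-\alpha)^{|S|}$ per query, summing to about $rq/(\alpha n)$, which is $O(\alpha n)$ precisely when $\alpha n\gtrsim\sqrt{rq}$ and so recovers the $\tilde O(\sqrt n)$ bound); that analysis, together with the conditioning issues you flag, is the missing heart of the argument.
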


\begin{corollary}
Deterministic $r$-round algorithms require at least $q = n^2/(r^3 \log^4 n)$  demand queries per round in order to find a
a maximum matching in a bipartite graph or even find a matching whose size is a constant fraction of the maximum
size matching.
\end{corollary}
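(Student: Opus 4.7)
The plan is to reduce this corollary to the preceding theorem via the OR-query simulation established in the lemma. First, I recall that a single demand query can be simulated by $\lceil \log_2(n+1) \rceil$ OR queries via binary search. Within a single parallel round of a demand-query algorithm the $q$ demand queries are non-adaptive among themselves (they depend only on answers to previous rounds), so the $q$ binary searches may be carried out in lockstep: at each of the $\log_2(n+1)$ subrounds all $q$ simulations simultaneously ask their next OR query. This converts an $r$-round $q$-demand-query-per-round protocol into an OR-query protocol with $r' = r \lceil \log_2(n+1) \rceil = O(r \log n)$ rounds and $q' = q$ queries per round.

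Next I would plug these parameters into the preceding theorem. It asserts that the resulting OR protocol cannot distinguish graphs with a perfect matching from those whose maximum matching has size at most $\alpha n$, where
\[
\alpha \;=\; \frac{r'\sqrt{q' r' \log n}}{n} \;=\; O\!\left(\frac{r \log^2 n \cdot \sqrt{qr}}{n}\right).
\]
Now any algorithm that on every input outputs a matching of size at least a constant $c$ times the optimum would, on a perfect-matching graph, produce output of size at least $cn$, and on a graph whose maximum matching has size at most $\alpha n$, produce output of size at most $\alpha n$. When $\alpha < c$ these two output ranges are disjoint, so the algorithm would distinguish the two classes, contradicting the theorem.

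Finally I would solve $\alpha < c$ for $q$: squaring and rearranging $r \log^2 n \cdot \sqrt{qr}/n < c$ gives $q r^3 \log^4 n < c^2 n^2$, hence $q < c^2 n^2/(r^3 \log^4 n)$. Taking the contrapositive, every deterministic $r$-round constant-factor approximation algorithm (in particular any algorithm computing a maximum matching) must use $q = \Omega(n^2/(r^3 \log^4 n))$ demand queries per round, which is exactly the corollary.

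The only step requiring genuine care is the parallel simulation: the $\log n$ levels of binary search are inherently sequential within each simulated demand query, but because the $q$ queries of one original round are chosen non-adaptively among themselves, the $\log n$ blow-up can be charged entirely to the round count and not to the per-round query count. This asymmetric bookkeeping is what yields the $\log^4 n$ (rather than a smaller or larger) factor in the final bound; the rest of the argument is a direct algebraic inversion of the theorem.
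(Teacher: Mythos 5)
Your proposal is correct and follows exactly the route the paper intends: simulate each round of demand queries by $O(\log n)$ lockstep rounds of OR queries, plug $r'=O(r\log n)$, $q'=q$ into the theorem's bound $\alpha = r'\sqrt{q'r'\log n}/n$, and invert $\alpha < c$ to get $q = O(n^2/(r^3\log^4 n))$, which reproduces the stated $\log^4 n$ factor. The paper gives no separate proof of this corollary, and your careful handling of the adaptive-binary-search-within-a-round bookkeeping is precisely the point that makes the parameter accounting work.
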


\begin{corollary}
Deterministic $n^{1/7}$-round $n^{8/7}$ queries-per-round algorithms cannot approximate the maximum
matching within a factor of $n^{1/7}$.
\end{corollary}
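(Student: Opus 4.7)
The plan is to derive this corollary as a direct numerical consequence of the main theorem, after translating from demand queries into the OR query model in which the theorem is stated. Since the reduction in the preceding lemma costs a logarithmic factor in queries-per-round (but not in the number of rounds), I first replace the demand-query budget $q_{\text{dem}} = n^{8/7}$ per round with an OR-query budget $q = q_{\text{dem}} \cdot \log_2(n+1) = n^{8/7} \log_2(n+1)$, while keeping $r = n^{1/7}$.

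Next I substitute into the bound $\alpha = r\sqrt{q r \log n}/n$ provided by the main theorem and simplify. Writing $L$ for a polylogarithmic factor in $n$, the product inside the square root is $q r \log n = n^{8/7} \cdot n^{1/7} \cdot L = n^{9/7} L$, so $\sqrt{qr\log n} = n^{9/14} L^{1/2}$. Multiplying by $r/n = n^{1/7-1} = n^{-6/7}$ yields
\[
\alpha \;=\; n^{1/7} \cdot n^{9/14} \cdot L^{1/2} \cdot n^{-1} \;=\; n^{2/14 + 9/14 - 14/14} L^{1/2} \;=\; n^{-3/14}\, L^{1/2}.
\]
Thus any algorithm in the stated regime fails to distinguish graphs with a perfect matching (of size $n$) from graphs whose maximum matching has size at most $\alpha n = n^{11/14} L^{1/2}$.

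The approximation ratio such an algorithm is forced to incur is therefore at least $n / (\alpha n) = n^{3/14} / L^{1/2}$. Since $n^{1/7} = n^{2/14}$, the slack $n^{3/14 - 2/14} = n^{1/14}$ dominates any fixed power of $\log n$ for sufficiently large $n$, so $n^{3/14}/L^{1/2} \ge n^{1/7}$, which is exactly the claimed inapproximability.

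There is no real obstacle here beyond bookkeeping: the corollary is a plug-and-chug instantiation, and the only subtlety is to remember to pay the $\log n$ factor from the OR-to-demand reduction when translating the theorem's conclusion into the demand-query setting stated in the corollary. I would simply present the three displays above and note that the exponent arithmetic leaves a polynomial margin $n^{1/14}$ that absorbs the polylogarithmic overhead.
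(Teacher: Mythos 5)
Your derivation is correct and matches the paper's (implicit) route: the corollary is just the theorem instantiated at $r=n^{1/7}$, $q=n^{8/7}$ up to polylogs, and the resulting ratio $n^{3/14}/\mathrm{polylog}(n)$ exceeds $n^{1/7}=n^{2/14}$ with an $n^{1/14}$ margin to spare. One small bookkeeping correction: the binary-search simulation of a demand query by OR queries is adaptive, so in the round-based model the $\log_2(n+1)$ overhead multiplies the number of \emph{rounds} $r$ rather than the queries per round $q$; since $\alpha$ scales as $r^{3/2}q^{1/2}$, this shifts the polylog factor slightly but is still absorbed by the same $n^{1/14}$ slack, so the conclusion is unaffected.
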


\begin{proof}
We will describe an adversary algorithm for answering the rounds of queries.  
At every round, our adversary will 
decide (based on the queries in this round) on the existence or lack of existence of some 
subset of the edges
of the graph, in a way that the answers to all queries in this round are determined by these choices.
The edges that were decided to exist in this round will be called "YES" egdes of the round, and
those that were decided not to exist will be called "NO" edges of the round.

For any query to a set $S$ of edges that contains some edge that was already decided to exist
(a "YES" edge from some previous round), the answer to this query is already determined so
the adversary can
simply ignore
this query in this round (and thus the discussion below can assume that such sets are never queried).  
Similarly any query to a set $S$ that
includes some subset of "NO" edges from previous rounds may be considered
by the adversary as though it is a query only
to the subset of edges that were previously not answered.  The rest of the discussion can thus assume 
without loss of generality that all edges queried in all queries in this round are ``new'' ones
for which the adversary has not committed to an answer yet.

Specifically, 
for every $OR$ query on a set $S$ of edges made in this round the adversary will either 
fix all the edges in $S$ to be "NO" edges (which fixes a negative answer to this query), or will fix
(at least) one edge in $S$  to be "YES" (which fixes a positive anser to this query).  The adversary's
goal is to make sure that after fixing all these "YES" and "NO" edges in all $r$ rounds, the underlying
graph can still either have a perfect matching or not have any matching of size greater than $\alpha n$.
The adversary will maintain this property by maintaining the following constraints on the 
"YES" and "NO" edges of {\em each round}:

\begin{enumerate}
\item For every vertex in the graph, strictly less than $n/(2r)$ edges adjacent to it are "NO" edges of the round.
\item The set of vertices that are adjacent to any "YES" edge of the round is of size at most $O(\alpha n /r)$.
\end{enumerate}

Given these conditions, the algorithm cannot distinguish between the following two extreme cases: (a) all
edges that were not decided in any round in fact do not exist, in which case the maximum possible matching is of size
$O(\alpha n)$ since every edge in it must be adjacent to a "YES" edge of some round, and there are at most
$O(\alpha n / r)$ such vertices in every round (b) all edges that 
were not decided yet do exist in which case a perfect matching exists (which follows from the fact that the degree
of every vertex is strictly more that $n/2$ since in each round strictly less than $n/(2r)$ edges adjacent to
any vertex were set to be ``NO'' edges.)

Here is how the adversary makes decisions for a given round that makes $q$ 
$OR$ queries on sets $S_1 ... S_q$ (each of which contains only edges that were not answered to be ``YES'' or ``NO''
in any previous round).
The adversary will handle separately the $OR$ queries on sets that contain at least 
$\theta = n \sqrt{\log n}/\sqrt{2qr}$ edges (``big queries'') 
and those of size at most $\theta$ edges (``small queries'').  

\noindent
{\bf Big queries:} the adversary will pick a set of size $O(n\log n / \theta)$ of vertices 
with the property that it intersects each of the ``Big'' queries $S_i$ and answer that all edges 
connected to them (those which were not already decided) exist, i.e are ``YES'' edges for this round.  
This suffices for answering all the big queries with a positive answer.
The existence of such a set 
is given by a random construction: choose every vertex at random to be in this set with probability 
$O(\log n / \theta)$:
for a fixed big query the probability that at non of its edges 
connects to this chosen set is at most $(1-\log n /\theta)^\theta < 1/n^2$ and since there are certainly less
than $n^2$ queries, with high probability the chosen set intersects all big queries simultaneously. 

\noindent
{\bf Small queries:} At this point the adversary has already ensured a positive answer to all
big queries, and perhaps also to some small queries, so let us focus on the small queries 
for which the answer is not yet fixed.  There are at most $q \theta$ edges in all of the these small 
queries combined.  Since each 
edge contains two vertices, there can be at most $(2 \cdot q \theta) / (n/(2r)) = 2 \sqrt{2qr\log n}$ 
vertices in the graph that are each adjacent 
to more than $n/(2r)$ of these edges, which will be called ``heavy'' vertices.  
The adversary will answer ``YES'' to all edges that are 
adjacent to one of these 
heavy vertices
and ``NO'' to all other edges. So we only need to show that there are at most $O(\alpha n /r)$ such heavy
vertices, which is true since $\alpha n / r = \sqrt{q r \log n}$ as needed.
\end{proof}

\bibliographystyle{plain}
\bibliography{bib}

\end{document}